\documentclass[letterpaper, 10 pt, conference]{ieeeconf}
\IEEEoverridecommandlockouts

\usepackage{geometry}
\newgeometry{top=0.75in, bottom=0.75in, right=0.75in, left=0.75in}

\usepackage{amsmath,amssymb,amsfonts}

\usepackage{amsthm}
\usepackage{algorithm}
\usepackage{algorithmic}
\usepackage{graphicx}
\usepackage{textcomp}
\usepackage{xcolor}
\usepackage{cite}
\usepackage{booktabs}
\usepackage{multirow}
\usepackage{url}
\usepackage{bm}
\usepackage{tikz}
\usepackage{subcaption}
\usepackage{balance}

\newtheoremstyle{cdcplain}{}{}{}{}{\itshape}{.}{ }{}
\newtheoremstyle{cdcdefn}{}{}{}{}{\itshape}{.}{ }{}

\theoremstyle{cdcplain}
\newtheorem{theorem}{Theorem}
\newtheorem{lemma}{Lemma}

\theoremstyle{cdcdefn}
\newtheorem{definition}{Definition}
\newtheorem{assumption}{Assumption}
\newtheorem{problem}{Problem}
\newtheorem{remark}{Remark}

\newcommand{\R}{\mathbb{R}}

\newcommand{\Z}{\mathcal{Z}}

\newcommand{\Mtilde}{\tilde{M}}

\newcommand{\Rcp}{\mathcal{R}^{\mathrm{CP}}}
\newcommand{\qhat}{\hat{q}}
\newcommand{\Dcal}{\mathcal{D}_{\mathrm{cal}}}
\newcommand{\Dtr}{\mathcal{D}_{\mathrm{tr}}}
\newcommand{\Pcal}{P_{\mathrm{cal}}}
\newcommand{\Ptest}{P_{\mathrm{test}}}

\title{\LARGE \bf Conformalized Data-Driven Reachability Analysis\\with PAC Guarantees}

\author{Yanliang Huang$^{*,1}$, Zhen Zhang$^{*,1}$, Peng Xie$^{1}$, Zhuoqi Zeng$^{2}$, Amr Alanwar$^{1}$%
\thanks{$^{*}$Equal contribution.}%
\thanks{$^{1}$Technical University of Munich, Munich, Germany.
        \texttt{\{yanliang.huang, zhenzhang.zhang, p.xie, alanwar\}@tum.de}}%
\thanks{$^{2}$Hainan Bielefeld University of Applied Sciences, Hainan, China.
        \texttt{zhuoqi.zeng@hainan-biuh.edu.cn}}%
}

\begin{document}

\maketitle

\begin{abstract}
Data-driven reachability analysis computes over-approximations of reachable sets directly from noisy data. Existing deterministic methods require either known noise bounds or system-specific structural parameters such as Lipschitz constants. We propose Conformalized Data-Driven Reachability (CDDR), a framework that provides Probably Approximately Correct (PAC) coverage guarantees through the Learn Then Test (LTT) calibration procedure, requiring only that calibration and test trajectories be independently and identically distributed. CDDR is developed for three settings: linear time-invariant (LTI) systems with unknown process noise distributions, LTI systems with bounded measurement noise, and general nonlinear systems including non-Lipschitz dynamics. Experiments on a 5-dimensional LTI system under Gaussian and heavy-tailed Student-$t$ noise and on a 2-dimensional non-Lipschitz system with fractional damping demonstrate that CDDR achieves valid coverage where deterministic methods do not provide formal guarantees. Under anisotropic noise, a normalized score function reduces the reachable set volume while preserving the PAC guarantee.
\end{abstract}

\section{Introduction}

Reachability analysis computes the set of all states that a dynamical system can reach under given initial conditions and input sets, and is a standard tool for verifying safety in cyber-physical systems~\cite{althoff2021setbased}. Model-based approaches require an explicit system model, which is often unavailable or expensive to obtain. Data-driven reachability analysis~\cite{alanwar2023data} bypasses this requirement by constructing over-approximating reachable sets directly from measured trajectories, providing safety certificates without first identifying a parametric model.

Existing data-driven methods impose assumptions that are difficult or impossible to satisfy in practice. The methods in~\cite{alanwar2023data} require the process noise bound $\Z_w$ as input; when the noise distribution is unknown, as is typical in real applications, $\Z_w$ cannot be constructed. A natural workaround is to estimate $\Z_w$ from the per-dimension maximum of training residuals, yet the resulting guarantee depends on whether the finite training set covers the worst-case residual over the entire state and noise space. As noted in~\cite{alanwar2023data}, when this condition is not met, no formal guarantee can be provided, and the condition itself cannot be verified from finite data. For nonlinear extensions, the polynomial-based method requires the polynomial degree to be known, and the Lipschitz-based method requires the Lipschitz constant $L^*$ and covering radius, both of which are unavailable for general nonlinear or non-smooth systems.

Conformal prediction (CP)~\cite{vovk2005algorithmic} provides distribution-free prediction sets with finite-sample coverage guarantees under exchangeability, but the marginal guarantee is conditioned on a single calibration draw. The Learn Then Test (LTT) framework~\cite{angelopoulos2025ltt} provides a stronger $(\alpha,\delta)$-Probably Approximately Correct (PAC) guarantee, ensuring that coverage exceeds $1-\alpha$ with probability at least $1-\delta$ over the random selection of calibration data.

\subsection{Related Work}

Conformal prediction has been applied to reachability: Lindemann et al.~\cite{lindemann2023safe} use CP for safe prediction regions; Hashemi et al.~\cite{hashemi2023data} apply conformal inference to produce data-driven reachable sets for stochastic systems with marginal coverage guarantees; Dixit et al.~\cite{dixit2023adaptive} extend adaptive CP to motion planning; and Cleaveland et al.~\cite{cleaveland2024conformal} construct conformal prediction regions for time series. Beyond conformal methods, Devonport et al.~\cite{devonport2021data} estimate reachable sets via Christoffel function level sets with PAC guarantees based on VC dimension arguments.

Stochastic reachability has also been studied via chance-constrained optimization~\cite{blackmore2011chance}, probabilistic invariance~\cite{summers2010verification}, AMGF-based probabilistic tubes~\cite{liu2024probabilistic} that provide high-probability bounds under sub-Gaussian noise and Lipschitz assumptions, EVT-based risk analysis~\cite{chapman2022scalable}, scenario-based methods~\cite{calafiore2006scenario} that provide probabilistic guarantees via random sampling but require problem-specific structural assumptions, and holdout-based risk control~\cite{bates2021distribution,dietrich2025data} that provides an a-posteriori $\delta$-guarantee without per-step adaptivity.

Unlike CP-based reachability methods~\cite{lindemann2023safe,hashemi2023data}, which provide only marginal coverage conditioned on a single calibration set, CDDR provides a PAC guarantee valid across calibration draws and uses zonotope-based set representations~\cite{althoff2015cora} for efficient propagation.

\subsection{Contributions}

The contributions of this paper are as follows. First, we formalize data-driven reachability under unknown noise as a calibration problem and provide an $(\alpha,\delta)$-PAC coverage guarantee via LTT. Second, we establish a model-guarantee decoupling principle: the PAC guarantee holds independently of model accuracy, enabling coverage for linear time-invariant (LTI) systems with measurement noise and non-Lipschitz nonlinear systems where existing methods are inapplicable. Third, we show that score function design controls the geometry-coverage tradeoff, and introduce a normalized score that exploits residual anisotropy to reduce set volume by orders of magnitude under anisotropic noise while preserving the PAC guarantee.

\section{Preliminaries}


\subsection{Notations}
The sets of real and natural numbers are denoted by $\mathbb{R}$ and $\mathbb{N}$, respectively. The identity matrix of dimension $n$ is denoted by $I_n$. For a vector $x$, the $i$-th component is denoted by $x_i$, and its $\ell^\infty$ norm is defined as $\|x\|_\infty = \max_i |x_i|$. For a matrix $M$, $M^\dagger$ denotes its Moore--Penrose pseudoinverse. The notation $\mathrm{diag}(a_1,\ldots,a_n)$ denotes the diagonal matrix with diagonal entries $a_1,\ldots,a_n$. Given a collection of vectors or matrices, $[\cdot;\cdot]$ denotes vertical concatenation and $[\cdot,\cdot]$ denotes horizontal concatenation. For a random variable $X$, $\mathbb{P}(X \in \cdot)$ denotes its probability law, and expectations are written as $\mathbb{E}[\cdot]$.

\subsection{Set Representation}


\begin{definition}[{Zonotope} \cite{kuhn1998rigorously}] \label{def:zonotopes} 
Given a center $c \in \mathbb{R}^{d}$ and $p \in \mathbb{N}$ generator vectors collected in the generator matrix 
$G=\begin{bmatrix} g^{(1)}& \dots &g^{(p)}\end{bmatrix} \in \mathbb{R}^{d \times p}$, 
a zonotope $\mathcal{Z}=\langle c,G\rangle$ is the set
\begin{equation}
	\mathcal{Z} = \Big\{ x \in \mathbb{R}^{d} \; \Big| \; x = c + \sum_{i=1}^{p} \beta^{(i)} g^{(i)}, \;
	-1 \leq \beta^{(i)} \leq 1 \Big\}.
\end{equation}
Let $M \in \mathbb{R}^{m \times d}$ be a linear map. For a zonotope $\mathcal{Z}=\langle c,G\rangle$, its linear image is given by $M\mathcal{Z}=\langle Mc,MG\rangle$. For two zonotopes $\mathcal{Z}_1=\langle c_1,G_1\rangle$ and $\mathcal{Z}_2=\langle c_2,G_2\rangle$, the Minkowski sum $\mathcal{Z}_1\oplus\mathcal{Z}_2=\{z_1+z_2\mid z_1\in\mathcal{Z}_1,\; z_2\in\mathcal{Z}_2\}$ is computed exactly as $\mathcal{Z}_1\oplus\mathcal{Z}_2=\langle c_1+c_2,[G_1,G_2]\rangle$. The Cartesian product of $\mathcal{Z}_1$ and $\mathcal{Z}_2$ is given by $\mathcal{Z}_1\times\mathcal{Z}_2=\left\{\begin{bmatrix}z_1\\z_2\end{bmatrix}\mid z_1\in\mathcal{Z}_1,\; z_2\in\mathcal{Z}_2\right\}=\left\langle \begin{bmatrix} c_1\\ c_2\end{bmatrix}, \begin{bmatrix} G_1 & 0\\ 0 & G_2\end{bmatrix}\right\rangle$.
\end{definition}

\subsection{Conformal Prediction and Learn Then Test}

Split conformal prediction~\cite{vovk2005algorithmic,angelopoulos2023conformal} partitions the data into training and calibration sets. Given $n$ calibration scores $s_1, \ldots, s_n$ and a new test score $s_{n+1}$, all exchangeable (i.e., their joint distribution is invariant to permutation), the conformal quantile $\qhat = \mathrm{Quantile}_{1-\alpha}(\{s_1,\ldots,s_n\} \cup \{+\infty\})$ satisfies $P(s_{n+1} \leq \qhat) \geq 1-\alpha$. This marginal guarantee holds for any score distribution but is conditioned on a single calibration set: a different random split may yield a different $\qhat$ and a different coverage level, and this variability is not controlled.

The LTT framework~\cite{angelopoulos2025ltt} addresses this limitation by providing an $(\alpha,\delta)$-PAC guarantee that holds with high probability over the random selection of calibration data. The goal is to find the smallest threshold $\lambda$ such that the population risk $L(\lambda) = \mathbb{E}[\ell(\lambda; X)]$ satisfies $L(\lambda) \leq \alpha$, where $X$ denotes a calibration data point and $\ell(\lambda; X) \in \{0,1\}$ indicates whether $X$ violates the threshold~$\lambda$. LTT scans a sequence of candidate thresholds from large (conservative) to small (aggressive), testing the null hypothesis $H_\lambda: L(\lambda) > \alpha$ for each $\lambda$. If $H_\lambda$ is rejected, $\lambda$ is deemed safe and the scan continues to the next smaller value; the procedure stops at the first $\lambda$ that cannot be rejected.

For each candidate $\lambda$, the empirical loss $\hat{L}(\lambda) = \frac{1}{n}\sum_{i=1}^n \ell(\lambda; X_i)$ is computed from the calibration set, and a p-value is obtained via the Hoeffding-Bentkus (HB) inequality:
\begin{equation}
p_{\mathrm{HB}}(\lambda) = \min\!\big(e^{-n \cdot \mathrm{kl}(\hat{L}(\lambda) \| \alpha)},\; P_{\mathrm{Bin}}(\lfloor n\hat{L}(\lambda)\rfloor; n, \alpha)\big),
\label{eq:p-value}
\end{equation}
where $\mathrm{kl}(q \| p) = q\ln(q/p) + (1\!-\!q)\ln((1\!-\!q)/(1\!-\!p))$ is the binary KL divergence and $P_{\mathrm{Bin}}(k; n, p) = P(X \leq k)$ for $X \sim \mathrm{Bin}(n, p)$ is the Binomial CDF. The first term is the Hoeffding exponential bound and the second is the exact Binomial tail; taking the minimum yields a tighter p-value than either bound alone. When $\hat{L}(\lambda) \geq \alpha$, we set $p_{\mathrm{HB}}(\lambda) = 1$ since the observed risk provides no evidence against $H_\lambda$. Rejecting $H_\lambda$ when $p_{\mathrm{HB}}(\lambda) \leq \delta$ ensures
\begin{equation}
\Pcal\!\big(L(\hat{\lambda}) \leq \alpha\big) \geq 1 - \delta.
\label{eq:ltt-guarantee}
\end{equation}
When multiple thresholds must be selected simultaneously (e.g., one per prediction step $k = 1, \ldots, N$), a Bonferroni correction allocates $\alpha/N$ and $\delta/N$ to each step. The union bound then guarantees that all $N$ thresholds satisfy their respective risk bounds jointly with probability at least $1-\delta$.

\section{Problem Formulation}

We consider discrete-time systems with additive noise. The three settings addressed in this paper are as follows.

(A) LTI without measurement noise:
\begin{align}\label{A}
x(k\!+\!1) = A_{\mathrm{tr}} x(k) + B_{\mathrm{tr}} u(k) + w(k),
\end{align}
where $w(k)$ has an arbitrary unknown distribution and the state $x(k)$ is directly observed.

(B) LTI with measurement noise:
\begin{equation}\label{B}
\begin{aligned}
x(k\!+\!1) &= A_{\mathrm{tr}} x(k) + B_{\mathrm{tr}} u(k) + w(k), \\
y(k) &= x(k) + v(k),
\end{aligned}
\end{equation}
where $x(k) \in \R^{n_x}$, $u(k) \in \R^{n_u}$, $v(k) \in \Z_v$ is bounded measurement noise with known origin-symmetric zonotope bound $\Z_v = \langle 0, G_v \rangle$, and $w(k)$ has an unknown distribution.

(C) Nonlinear without measurement noise:
\begin{align}\label{C}
x(k\!+\!1) = f(x(k), u(k)) + w(k),
\end{align}
where $f$ is a general nonlinear function (possibly non-Lipschitz) and $w(k)$ has an arbitrary unknown distribution.

\begin{definition}[{Exact Reachable Set}]
Given an initial set $\mathcal{X}_0$, input sets $\mathcal{U}_k$, and disturbance set $\mathcal{Z}_w$, the exact reachable set at time $N$ is defined by
\begin{align}\label{eq:R}
\mathcal{R}_N
= \Big\{ x(N)\in\mathbb{R}^{n_x} \;\Big|\;& x(0)\in\mathcal{X}_0,\;
u(k)\in\mathcal{U}_k,\;
w(k)\in\mathcal{Z}_w, \nonumber\\
& x(k+1)=f(x(k),u(k))+w(k),\;\nonumber\\&
\forall k=0,\ldots,N-1 \Big\}.
\end{align}
\end{definition}



In all three settings, the system matrices or dynamics $f$, as well as the noise distributions, are unknown. The available data $\{(x_j(0), u_j(0), \ldots, u_j(N\!-\!1), x_j(N))\}_{j=1}^K$ consist of $K$ trajectories of length $N\!+\!1$, each generated from an initial state in $\mathcal{X}_0$ under inputs in $\mathcal{U}$. The dataset is partitioned into a training set $\Dtr$ of size $K_{\mathrm{tr}}$ and a calibration set $\Dcal$ of size $n = K - K_{\mathrm{tr}}$.

\begin{problem}
Given $\alpha \in (0,1)$ and $\delta \in (0,1)$, construct reachable sets $\Rcp_k$ for $k = 0, 1, \ldots, N$ such that
\begin{multline}
\Pcal\!\bigl(\Ptest\!\bigl(x_{\mathrm{new}}(k) \in \Rcp_k,\; \forall k = 1, \ldots, N\bigr) \\
\geq 1 - \alpha\bigr) \geq 1 - \delta,
\label{eq:pac-goal}
\end{multline}
where $x_{\mathrm{new}}(k)$ denotes the state of a new test trajectory at step $k$, $\Pcal$ is over the randomness in the calibration data, and $\Ptest$ is over the new test trajectory drawn from the same distribution.
\end{problem}

The reachable sets $\Rcp_k$ depend on the calibration data and vary across calibration draws. For any fixed calibration set, $\Ptest$ quantifies the fraction of future trajectories contained in the sets, while $\Pcal$ controls the variability across draws.

\section{Conformalized Data-Driven Reachability}

This section presents CDDR for the three cases defined in Section~III, followed by the unified theoretical guarantee.

\subsection{LTI Systems}

We first present CDDR for LTI systems in~\eqref{A} without measurement noise, as summarized in Algorithm~\ref{alg:cddr-lti}. The pipeline comprises three phases: first, a least-squares model $\Mtilde$ is fitted on the training set $\Dtr$; then, for each calibration trajectory $j \in \Dcal$ and step $k$, the residual vector $r_{j,k} = x_j(k\!+\!1) - \Mtilde[x_j(k);\,u_j(k)]$ is computed and the per-step score is $s_{j,k} = \|r_{j,k}\|_\infty$. Thresholds $\qhat_k$ are selected via LTT fixed-sequence testing with Bonferroni-corrected levels $\alpha/N$, $\delta/N$; finally, reachable sets $\Rcp_{k+1}$ are propagated using $\Mtilde$ and the conformalized error zonotopes $\Z_k^{\mathrm{CP}} = \langle 0, \qhat_k I_{n_x} \rangle$.


\subsection{LTI Systems with Measurement Noise}
When only observation--input data $\{(y_j,u_j)\}_{j=1}^K$ are available for systems~\eqref{B}, we follow the same procedure as in Algorithm~\ref{alg:cddr-lti}. In line~\ref{line:model}, the model estimate is computed as $\tilde{\mathcal M}=Y_+[Y_-;\,U_-]^\dagger$. In line~\ref{line:s}, we compute the isotropic score $s_{j,k}=\|y_j(k{+}1)-\tilde{\mathcal M}[y_j(k);u_j(k)]\|_\infty$, and the recursion in line~\ref{line:ra} becomes
\begin{equation}
\Rcp_{k+1} = \Mtilde\big((\Rcp_k \oplus \Z_v) \times \mathcal{U}\big) \oplus \Z_k^{\mathrm{CP}} \oplus \Z_v,
\label{eq:lti-meas-rec}
\end{equation}
converting between state and observation spaces. The Minkowski sums with $\Z_v$ account for the fact that the true state $x(k)$ lies in $\{y(k)\} \oplus (-\Z_v)$, so the reachable set must expand at each step to cover all consistent states. This extension addresses the fact that~\cite{alanwar2023data} does not provide a formal guarantee under measurement noise.


\textbf{Why $\Z_v$ must be known?} Unlike process noise $w(k)$, which only appears in the transition equation and is naturally captured by the conformal residual $r_{j,k}$, measurement noise $v(k)$ corrupts the \emph{inputs} to the model at every step: we feed $y(k)$ instead of the true $x(k)$. The observation-to-state conversion $x \in y \oplus (-\Z_v)$ requires a known, bounded $\Z_v$. 
If $v$ has unknown distribution or does not admit a known bounded support, the Minkowski sum $\Rcp_k \oplus \Z_v$ would be unbounded, making the zonotope recursion~\eqref{eq:lti-meas-rec} ill-defined. Hence a bounded $\Z_v$ remains a structural requirement.

\subsection{General Nonlinear Systems}

For nonlinear systems in~\eqref{C} where $f$ is unknown and may be non-Lipschitz, a single global linear model is generally inadequate to describe the dynamics. CDDR instead fits a local affine model $M'_k$ at each time step $k$, linearized around a nominal trajectory $(x^*(k), u^*(k))$ computed from training data as in Algorithm~\ref{alg:cddr-lti}. The regressor $\Phi_k$ includes a bias term and state-input deviations from the nominal trajectory, so that $M'_k$ captures local first-order dynamics around $x^*(k)$. The remaining phases parallel the LTI case.

A central property of CDDR is that model accuracy and statistical coverage are fully decoupled: any model mismatch, whether from linearization, non-smoothness, or unmodeled dynamics, is absorbed into the conformal residual $r_{j,k}$, and $\qhat_k$ is calibrated to cover it. A more accurate model yields tighter sets, but the coverage bound remains valid regardless of the chosen model class.

\begin{algorithm}[t]
\caption{CDDR for Linear and Nonlinear Systems}
\label{alg:cddr-lti}
\label{alg:cddr-nl}
\begin{algorithmic}[1]
\REQUIRE Model class $m \in \{\texttt{linear},\texttt{nonlinear}\}$, trajectories $\{(x_j,u_j)\}_{j=1}^K$, initial set $\mathcal{X}_0$, admissible input set $\mathcal{U}$, miscoverage level $\alpha \in (0,1)$, confidence level $\delta \in (0,1)$, horizon $N \in \mathbb{N}$
\ENSURE Reachable sets $\{\Rcp_k\}_{k=0}^N$
\STATE Split data into $\Dtr$ and $\Dcal$; set $n \gets |\Dcal|$
\STATE Form $X_-,U_-,X_+$ from $\Dtr$
\IF{$m=\texttt{linear}$}
    \STATE $\Mtilde \leftarrow X_+ \begin{bmatrix} X_- \\ U_- \end{bmatrix}^{\!\dagger}$\label{line:model}
\ELSE
    \STATE Compute nominal trajectory $(x^*,u^*)$
\ENDIF
\FOR{$k = 0$ to $N\!-\!1$}
    \IF{$m=\texttt{nonlinear}$}
        \STATE Form $X_-^{(k)},U_-^{(k)},X_+^{(k)}$ from $\Dtr$
        \STATE $M'_k \leftarrow X_+^{(k)} \Phi_k^{\dagger}$, \;
        $\Phi_k = \begin{bmatrix}
        \mathbf{1}^\top \\
        X_-^{(k)} - \mathbf{1} \otimes x^*(k) \\
        U_-^{(k)} - \mathbf{1} \otimes u^*(k)
        \end{bmatrix}$
    \ENDIF
    \FOR{each $j \in \Dcal$}
        \IF{$m=\texttt{linear}$}
            \STATE $s_{j,k} = \|x_j(k\!+\!1) - \Mtilde [x_j(k);\,u_j(k)]\|_\infty$\label{line:s}
        \ELSE
            \STATE $s_{j,k} = \|x_j(k\!+\!1) - M'_k [1;\,x_j(k)-x^*(k);\,u_j(k)-u^*(k)]\|_\infty$
        \ENDIF
    \ENDFOR
    \STATE Select threshold $\qhat_k$ via LTT fixed-sequence testing such that $p_{\mathrm{HB}}(\hat{L}_k(\lambda),n,\alpha/N)\le\delta/N$
    \STATE $\Z_k^{\mathrm{CP}} \leftarrow \langle 0,\, \qhat_k I_{n_x} \rangle$
\ENDFOR
\STATE $\Rcp_0 \leftarrow \mathcal{X}_0$
\FOR{$k = 0$ to $N\!-\!1$}
    \IF{$m=\texttt{linear}$}
        \STATE $\Rcp_{k+1} \leftarrow \Mtilde(\Rcp_k \times \mathcal{U}) \oplus \Z_k^{\mathrm{CP}}$\label{line:ra}
    \ELSE
        \STATE $\Rcp_{k+1} \leftarrow M'_k\big(\{1\} \times (\Rcp_k-x^*(k)) \times (\mathcal{U}-u^*(k))\big) \oplus \Z_k^{\mathrm{CP}}$
    \ENDIF
\ENDFOR
\end{algorithmic}
\end{algorithm}


\subsection{Theoretical Guarantee}

To derive the PAC coverage guarantee, we require that the calibration trajectories and the future test trajectory be generated under the same data-generating mechanism. This is formalized by the following independently and identically distributed (i.i.d.) assumption.

\begin{assumption}[I.I.D.\ trajectories]
\label{ass:exch}
The calibration trajectories $\{(x_j, u_j)\}_{j \in \Dcal}$ and the test trajectory $(x_{\mathrm{new}}, u_{\mathrm{new}})$ are i.i.d.
\end{assumption}

This assumption is reasonable under batched data collection with a fixed controller in a stationary environment.

\begin{lemma}[Deterministic inclusion]
\label{lem:inclusion}
For the three system settings~\eqref{A}, \eqref{B}, and \eqref{C}, if $x_{\mathrm{new}}(0) \in \mathcal{X}_0$, $u_{\mathrm{new}}(k) \in \mathcal{U}$ for all $k$, and $\|r_{\mathrm{new},k}\|_\infty \leq \qhat_k$ for all $k = 0, \ldots, N\!-\!1$, then $x_{\mathrm{new}}(k) \in \Rcp_k$ for all $k = 0, \ldots, N$.
\end{lemma}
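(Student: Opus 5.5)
Induction on $k$ gives the result in each of the three settings. The base case is immediate: $\Rcp_0 = \mathcal{X}_0$ and $x_{\mathrm{new}}(0) \in \mathcal{X}_0$ by hypothesis. For the inductive step, assume $x_{\mathrm{new}}(k) \in \Rcp_k$ for some $k \le N-1$. The common device is to write the next state as the model prediction plus the residual. The model prediction lies in the propagated set by the zonotope linear-map and Cartesian-product rules of Definition~\ref{def:zonotopes}. The residual lies in $\Z_k^{\mathrm{CP}} = \langle 0, \qhat_k I_{n_x}\rangle$, because this zonotope is exactly the $\ell^\infty$ ball of radius $\qhat_k$ and $\|r_{\mathrm{new},k}\|_\infty \le \qhat_k$. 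The Minkowski sum then contains their sum. Here $r_{\mathrm{new},k}$ is the test residual defined with the same model and score as in Algorithm~\ref{alg:cddr-lti} for the corresponding setting. No property of the true dynamics enters, since $r_{\mathrm{new},k}$ is by definition the difference between the actual next state and the prediction.

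\emph{Setting \eqref{A}.} Here $x_{\mathrm{new}}(k+1) = \Mtilde[x_{\mathrm{new}}(k); u_{\mathrm{new}}(k)] + r_{\mathrm{new},k}$. Since $[x_{\mathrm{new}}(k); u_{\mathrm{new}}(k)] \in \Rcp_k \times \mathcal{U}$, it follows that $\Mtilde[x_{\mathrm{new}}(k); u_{\mathrm{new}}(k)] \in \Mtilde(\Rcp_k \times \mathcal{U})$, and line~\ref{line:ra} closes the step.

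\emph{Setting \eqref{C}.} Write $x_{\mathrm{new}}(k+1) = M'_k[1; x_{\mathrm{new}}(k)-x^*(k); u_{\mathrm{new}}(k)-u^*(k)] + r_{\mathrm{new},k}$. The argument of $M'_k$ lies in $\{1\}\times(\Rcp_k - x^*(k))\times(\mathcal{U}-u^*(k))$, which is a zonotope obtained by translation and Cartesian product. The nonlinear recursion in Algorithm~\ref{alg:cddr-nl} then closes the step exactly as before.

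\emph{Setting \eqref{B}.} This is the step that needs care, because the residual is defined in observation space: $r_{\mathrm{new},k} = y_{\mathrm{new}}(k+1) - \Mtilde[y_{\mathrm{new}}(k); u_{\mathrm{new}}(k)]$. I will therefore carry a stronger invariant through the induction, namely $x_{\mathrm{new}}(k) \in \Rcp_k$ for every $k$.
\begin{itemize}
\item Since $y_{\mathrm{new}}(k) = x_{\mathrm{new}}(k) + v(k)$ with $v(k) \in \Z_v$, we get $y_{\mathrm{new}}(k) \in \Rcp_k \oplus \Z_v$. Hence $\Mtilde[y_{\mathrm{new}}(k); u_{\mathrm{new}}(k)] \in \Mtilde((\Rcp_k\oplus\Z_v)\times\mathcal{U})$.
\item Adding the residual gives $y_{\mathrm{new}}(k+1) \in \Mtilde((\Rcp_k\oplus\Z_v)\times\mathcal{U}) \oplus \Z_k^{\mathrm{CP}}$.
\item Finally, $x_{\mathrm{new}}(k+1) = y_{\mathrm{new}}(k+1) - v(k+1)$, and $-v(k+1) \in -\Z_v = \Z_v$ because $\Z_v$ is origin-symmetric. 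This yields $x_{\mathrm{new}}(k+1) \in \Rcp_{k+1}$ as defined in \eqref{eq:lti-meas-rec}.
\end{itemize}

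The main obstacle is this bookkeeping between state and observation in setting~\eqref{B}. One must check that the two $\Z_v$ sums appear in the right places and that symmetry of $\Z_v$ is used exactly once. The base case is unchanged, since the initial condition is stated for the true state.
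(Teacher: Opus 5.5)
Your proof is correct and follows the same induction as the paper. The model prediction lies in the propagated zonotope, the residual lies in $\Z_k^{\mathrm{CP}}$, and the Minkowski sum closes the step; your explicit $\Z_v$ bookkeeping in setting~\eqref{B}, with origin-symmetry used once for $x_{\mathrm{new}}(k+1) = y_{\mathrm{new}}(k+1) - v(k+1)$, fills in what the paper compresses into a single remark. The only quibble is wording: the invariant $x_{\mathrm{new}}(k) \in \Rcp_k$ you carry in setting~\eqref{B} is exactly the lemma's conclusion, not a stronger one.
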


\begin{proof}
By induction. The base case $x_{\mathrm{new}}(0) \in \mathcal{X}_0 = \Rcp_0$ holds by hypothesis. For the inductive step, suppose $x_{\mathrm{new}}(k) \in \Rcp_k$. If $\|r_{\mathrm{new},k}\|_\infty \leq \qhat_k$, then $r_{\mathrm{new},k} \in \Z_k^{\mathrm{CP}}$. Since $u_{\mathrm{new}}(k) \in \mathcal{U}$, we have $[x_{\mathrm{new}}(k);\, u_{\mathrm{new}}(k)] \in \Rcp_k \times \mathcal{U}$, so
\[
x_{\mathrm{new}}(k\!+\!1) \in \Mtilde \big(\Rcp_k \times \mathcal{U}\big) \oplus \Z_k^{\mathrm{CP}} \subseteq \Rcp_{k+1},
\]
The LTI-Meas case adds $\Z_v$ for the observation-state conversion. The nonlinear case replaces $\Mtilde$ with $M'_k$ and uses the shifted regressor $[1;\, x_{\mathrm{new}}(k) - x^*(k);\, u_{\mathrm{new}}(k) - u^*(k)] \in \{1\} \times (\Rcp_k - x^*(k)) \times (\mathcal{U} - u^*(k))$.
\end{proof}

\begin{theorem}[$({\alpha},{\delta})$-PAC multi-step coverage]
\label{thm:pac}
Under Assumption~\ref{ass:exch}, the reachable sets computed by Algorithm~\ref{alg:cddr-lti} satisfy
\begin{equation}
\Pcal\!\bigl(\Ptest\!\bigl(x_{\mathrm{new}}(k) \in \Rcp_k,\; \forall k\bigr) \geq 1 - \alpha\bigr) \geq 1 - \delta.
\label{eq:pac-thm}
\end{equation}
\end{theorem}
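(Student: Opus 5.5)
The proof combines Lemma~\ref{lem:inclusion} with the per-step LTT guarantee~\eqref{eq:ltt-guarantee} and a union bound, applied twice. The union bound is used once over the calibration randomness, to control $\delta$, and once over the test randomness, to control $\alpha$.

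First, I condition on the training set $\Dtr$. This makes the model ($\Mtilde$ or $\{M'_k\}$, with the nominal trajectory) a fixed deterministic function, so each residual $r_{j,k}$ is a fixed measurable function of trajectory $j$ alone. By Assumption~\ref{ass:exch}, for each step $k$ the calibration scores $s_{1,k},\dots,s_{n,k}$ and the test score $s_{\mathrm{new},k}=\|r_{\mathrm{new},k}\|_\infty$ are i.i.d. For each $k$ I define the population risk
\[
L_k(\lambda)=\Ptest\bigl(s_{\mathrm{new},k}>\lambda\bigr).
\]
The loss $\ell(\lambda;X)=\mathbf{1}\{s>\lambda\}$ is monotone in $\lambda$. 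Fixed-sequence testing with HB p-values at level $\delta/N$ therefore controls the family-wise error for the nulls $H_\lambda: L_k(\lambda)>\alpha/N$. By~\eqref{eq:ltt-guarantee} with $(\alpha/N,\delta/N)$, the bad event
\[
E_k=\{L_k(\qhat_k)>\alpha/N\}
\]
has $\Pcal(E_k)\le\delta/N$.

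The step I expect to require the most care is validity of the HB p-value itself. Under $H_\lambda$, $n\hat L_k(\lambda)\sim\mathrm{Bin}(n,L_k(\lambda))$ with $L_k(\lambda)>\alpha/N$, so this count is stochastically larger than $\mathrm{Bin}(n,\alpha/N)$. Both the binomial CDF term and the Hoeffding term are then super-uniform, and this is the only place where independence across calibration trajectories is genuinely used. Scores across different steps $k$ of the same trajectory may be dependent, but that dependence causes no problem, because the union bound needs no independence across $k$.

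Next, I apply the union bound over $k=0,\dots,N-1$. On the complement of $\bigcup_k E_k$, which has $\Pcal$-probability at least $1-\delta$, every threshold satisfies $\Ptest(\|r_{\mathrm{new},k}\|_\infty>\qhat_k)\le\alpha/N$. Since $\qhat_k$ is fixed given the calibration data, a second union bound over $k$ under $\Ptest$ gives
\[
\Ptest\bigl(\exists k:\ \|r_{\mathrm{new},k}\|_\infty>\qhat_k\bigr)\le\alpha .
\]

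Finally, I invoke Lemma~\ref{lem:inclusion}. The test trajectory is generated with $x_{\mathrm{new}}(0)\in\mathcal{X}_0$ and $u_{\mathrm{new}}(k)\in\mathcal{U}$, as in the data-generating assumption. On the event that $\|r_{\mathrm{new},k}\|_\infty\le\qhat_k$ for all $k$, the lemma gives $x_{\mathrm{new}}(k)\in\Rcp_k$ for all $k$, so
\[
\Ptest\bigl(x_{\mathrm{new}}(k)\in\Rcp_k\ \forall k\bigr)\ge1-\alpha
\]
on an event of $\Pcal$-probability at least $1-\delta$. Because this holds conditionally on every $\Dtr$, integrating over the training data yields~\eqref{eq:pac-thm}.

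For setting~\eqref{B}, the same chain applies with $r$ defined from the observations $y$. The $\Z_v$ Minkowski sums in~\eqref{eq:lti-meas-rec} are exactly what Lemma~\ref{lem:inclusion} uses to convert the observation residual bound into state inclusion. No additional probabilistic argument is needed there, since $v$ is almost surely bounded in $\Z_v$.
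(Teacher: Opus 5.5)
Your proof is correct and follows the same route as the paper's: a per-step LTT guarantee at level $(\alpha/N,\delta/N)$, a union bound over $k$ under $\Pcal$, a second union bound giving $\sum_k L_k(\qhat_k)\le\alpha$ under $\Ptest$, and then Lemma~\ref{lem:inclusion}; your conditioning on $\Dtr$ and your treatment of setting~\eqref{B} make explicit what the paper leaves implicit. One small fix to your HB aside: each term being super-uniform does not by itself make their minimum super-uniform, and what actually makes the HB p-value valid is that both terms are pointwise upper bounds on the null CDF of $n\hat L_k(\lambda)$ at the observed count (indeed the Chernoff term dominates the exact binomial tail, so the minimum is just the binomial term).
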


\begin{proof}
For each step $k$, LTT selects $\qhat_k$ such that $\Pcal(L_k(\qhat_k) \leq \alpha/N) \geq 1 - \delta/N$. By a union bound over $N$ steps,
\[
\Pcal\!\bigl(\forall k\!: L_k(\qhat_k) \leq \alpha/N\bigr) \geq 1 - \delta.
\]
Conditioned on this event,
\[
\Ptest\!\bigl(\exists k\!: \|r_{\mathrm{new},k}\|_\infty > \qhat_k\bigr) \leq \textstyle\sum_k L_k(\qhat_k) \leq \alpha.
\]
Lemma~\ref{lem:inclusion} converts score containment to set containment.
\end{proof}

\begin{remark}
The guarantee in~\eqref{eq:pac-thm} is distribution-free and model-class-free. Assumption~\ref{ass:exch} requires i.i.d.\ trajectories (not just exchangeability), because the HB p-value bounds independent Bernoulli indicators. This holds under batch data collection with a fixed controller and stationary noise; in closed-loop settings where the controller adapts to state estimates, trajectory-level independence may be violated. When only exchangeability holds, replacing LTT with the conformal quantile yields marginal coverage $P(s_{n+1} \leq \qhat) \geq 1-\alpha$ without the PAC bound, providing a weaker but still valid marginal guarantee. Importance-weighted conformal prediction~\cite{tibshirani2019conformal} can restore coverage under known covariate shift.
\end{remark}

\begin{remark}[Bonferroni correction]
\label{rem:bonf}
The per-step budget $\alpha/N$, $\delta/N$ is a Bonferroni union bound that becomes conservative as $N$ grows. An alternative is to define a trajectory-level score $s = \max_k \|r_k\|_\infty$ and calibrate a single threshold via LTT, thereby avoiding the $1/N$ penalty at the cost of losing per-step adaptivity, since early prediction steps typically exhibit smaller residuals and benefit from tighter per-step thresholds.
\end{remark}

\begin{remark}[Per-dimension and normalized score functions]
\label{rem:normalized}
The isotropic score $s_{j,k} = \|r_{j,k}\|_\infty$ applies a uniform threshold across all state dimensions, which is conservative when the noise is anisotropic. A straightforward extension is to calibrate each dimension independently, producing $\Z_k^{\mathrm{CP}} = \langle 0,\, \mathrm{diag}(\qhat_{k,1}, \ldots, \qhat_{k,n_x}) \rangle$, but this requires $N \cdot n_x$ Bonferroni-corrected hypotheses, increasing the minimum calibration size. To avoid this cost, one can instead use the normalized score $s_{j,k} = \|T_k^{-1} r_{j,k}\|_\infty$ with $T_k = \mathrm{diag}(\sigma_{k,1}, \ldots, \sigma_{k,n_x})$ estimated from $\Dtr$, which produces $\Z_k^{\mathrm{CP}} = \langle 0,\, \qhat_k T_k \rangle$. Since $T_k$ depends only on $\Dtr$, the PAC guarantee holds without modification and only $N$ hypotheses are needed.
\end{remark}

\section{Numerical Experiments}

\begin{figure*}[t]
    \centering
    {\small
    \tikz[baseline=-0.7ex]{\draw[line width=1.5pt, color={rgb,1:red,0.1;green,0.2;blue,0.8}] (0,0) -- (0.45,0);}\,CDDR\quad
    \tikz[baseline=-0.7ex]{\draw[line width=1.2pt, color={rgb,1:red,0.9;green,0.5;blue,0.0}] (0,0) -- (0.45,0);}\,CDDR (per-dim)\quad
    \tikz[baseline=-0.7ex]{\draw[line width=1.2pt, dashed, color={rgb,1:red,0.8;green,0.1;blue,0.1}] (0,0) -- (0.45,0);}\,Marginal CP\quad
    \tikz[baseline=-0.7ex]{\draw[line width=1.0pt, dashdotted, color=black] (0,0) -- (0.45,0);}\,Emp.-max\quad
    \tikz[baseline=-0.7ex]{\draw[line width=0.6pt, color={rgb,1:red,0.7;green,0.7;blue,0.7}] (0,0) -- (0.45,0);}\,Trajectories\quad
    \tikz[baseline=-0.7ex]{\draw[line width=1.5pt, color={rgb,1:red,0.0;green,0.6;blue,0.0}] (0,0) -- (0.45,0);}\,$\mathcal{X}_0$
    }
    \par\vspace{6pt}
    \begin{subfigure}[h]{0.32\textwidth}
        \includegraphics[width=\linewidth, trim=0 0 0 0, clip]{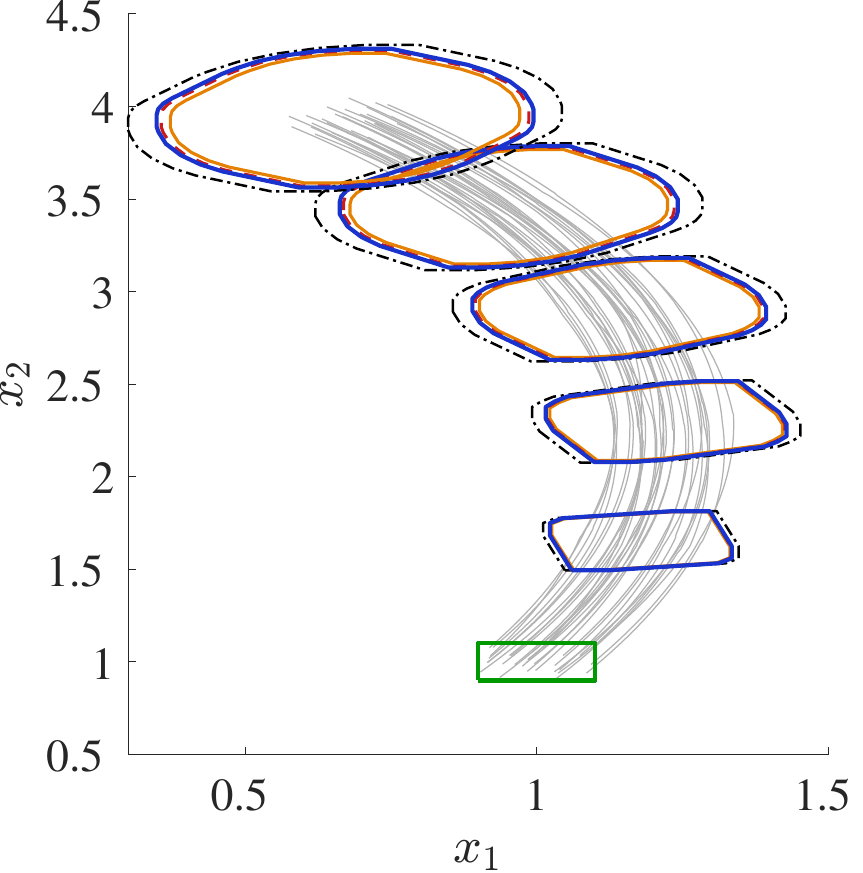}
        \caption{}
        \label{fig:lti-reach1}
    \end{subfigure}
    \hfill
    \begin{subfigure}[h]{0.32\textwidth}
        \includegraphics[width=\linewidth, trim=0 0 0 0, clip]{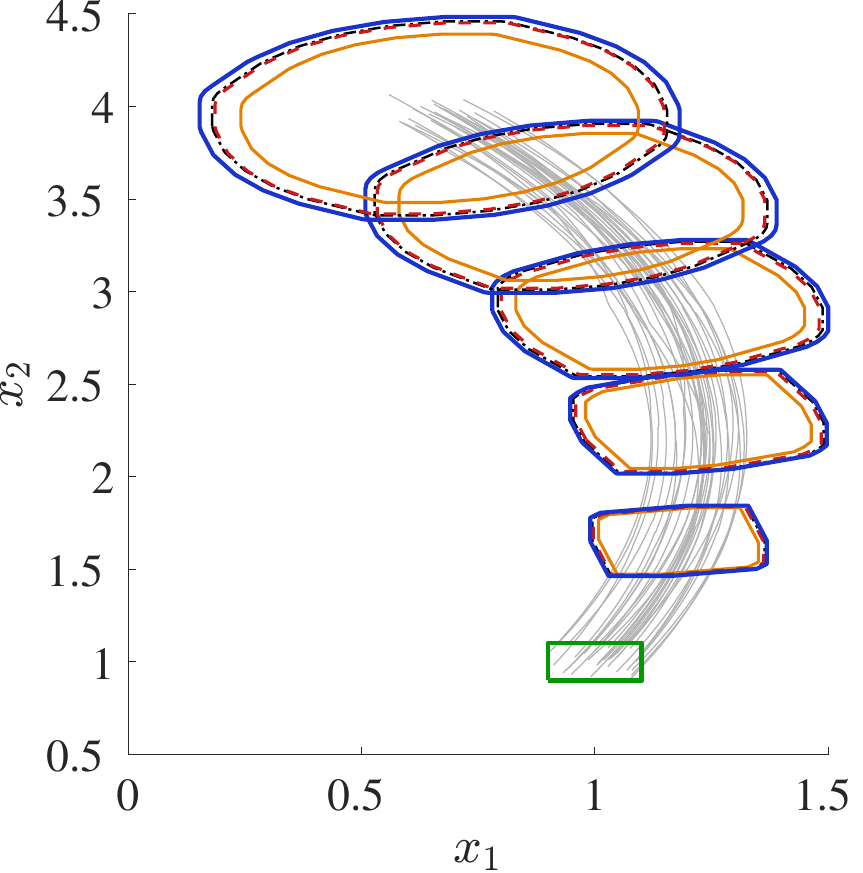}
        \caption{}
        \label{fig:lti-reach-t}
    \end{subfigure}
    \hfill
    \begin{subfigure}[h]{0.32\textwidth}
        \includegraphics[width=\linewidth, trim=0 0 0 0, clip]{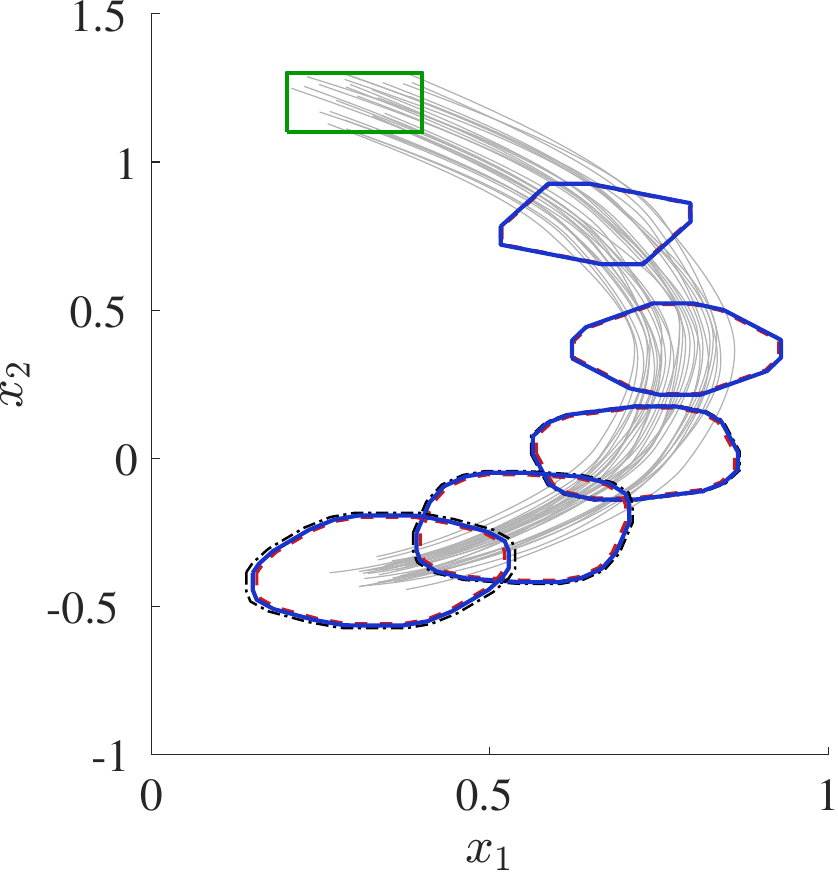}
        \caption{}
        \label{fig:sqrt-reach}
    \end{subfigure}
    \caption{Reachable sets for all methods. (a) 5D LTI under Gaussian noise ($x_1$-$x_2$). (b) 5D LTI under Student-$t$ noise ($x_1$-$x_2$). (c) 2D Nonlinear system with fractional damping; CDDR (per-dim) is omitted as the system is only 2-dimensional.}
    \label{fig:projSetA}
\end{figure*}

\subsection{Experimental Setup}

We evaluate CDDR on two systems. System~1 is the 5-dimensional LTI system from~\cite{alanwar2023data}, discretized at $T_s = 0.05$\,s, with two process noise distributions: (a)~Gaussian $w(k) \sim \mathcal{N}(0, \sigma_w^2 I)$ with $\sigma_w = 0.01$; (b)~Student-$t$ with each component drawn independently as $w_i(k) \stackrel{\mathrm{i.i.d.}}{\sim} \sigma_w \cdot t_5$ ($\nu = 5$), a heavy-tailed distribution that produces occasional large outliers compared to the Gaussian with the same scale. Both distributions are unbounded, so no finite $\Z_w$ exists.

System~2 is a 2-dimensional autonomous system with fractional damping: $x(k\!+\!1) = A x(k) + \gamma \varphi(x(k)) + w(k)$, where $A = [0.7,\, 0.35;\, {-0.35},\, 0.7]$, $\varphi(x) = [\sqrt{|x_1|}\operatorname{sign}(x_1);\, \sqrt{|x_2|}\operatorname{sign}(x_2)]$, and $\gamma = 0.05$. Since $\varphi$ is non-Lipschitz at the origin, the Lipschitz-based method in~\cite{alanwar2023data} does not apply. We use $w(k) \sim \mathcal{N}(0, 0.01^2 I)$.

We compare CDDR against two baselines: Empirical-max (Emp.-max), which uses the zonotope propagation of~\cite{alanwar2023data} but replaces the required known $\Z_w$ with the per-dimension maximum absolute training residual (i.e., in-sample), and Marginal CP (split conformal prediction with Bonferroni correction, providing the marginal coverage guarantee $P(s_{n+1} \leq \qhat) \geq 1-\alpha$ without a PAC confidence bound). The Marginal CP baseline provides a direct comparison with the guarantee type offered by conformal reachability methods such as~\cite{hashemi2023data}. For both systems, we use $K = 5000$ trajectories ($K_{\mathrm{tr}} = 200$, $n = 4800$), $N = 5$, $\alpha = 0.05$, $\delta = 0.05$, and evaluate on $10{,}000$ test trajectories. The per-step Bonferroni allocation requires $n \geq n_{\min} = \lceil \ln(\delta/N)/\ln(1-\alpha/N) \rceil = 459$ (obtained by requiring the Hoeffding bound to achieve $p \leq \delta/N$ when $\hat{L}=0$); $n = 4800 \gg n_{\min}$ ensures that the finite-sample cost of the PAC guarantee is small. All experiments ran in under 30 seconds on a standard desktop. All reported volumes are the exact zonotope volume at the final prediction step. Hausdorff distances $d_H(\Rcp_k, S_k)$ are computed via support functions sampled over 1{,}000 random directions, where $S_k$ is estimated from $10{,}000$ ground-truth trajectories.

\subsection{Coverage and Volume Results}

Table~\ref{tab:results} reports trajectory-level coverage, final-step volume, and Hausdorff distance for both systems. CDDR achieves $100\%$ coverage with the PAC guarantee from Theorem~\ref{thm:pac}. CDDR (per-dim) calibrates each state dimension independently, yielding tighter sets but requiring $n_{\min} = 3105$ calibration trajectories, a $7$ times increase over the isotropic $n_{\min} = 459$. Marginal CP also achieves $100\%$ coverage with smaller volumes because the empirical quantile is tight, but provides only marginal coverage without the PAC confidence bound. Emp.-max is the most conservative; under Student-$t$ noise its coverage drops to $99.9\%$ because heavy-tailed outliers exceed the training maximum. The slightly larger volumes of CDDR are the expected cost of the stronger guarantee.

\begin{table}[t]
\centering
\caption{Trajectory coverage (\%), final-step volume ($\times 10^{-2}$), and Hausdorff distance for both systems.}
\label{tab:results}
\resizebox{\columnwidth}{!}{%
\begin{tabular}{lllccc}
\toprule
System & Noise & Method & Cov. (\%) & Vol & $d_H$\\
\midrule
\multirow{8}{*}{\rotatebox{90}{5D LTI}}
& \multirow{4}{*}{Gauss.} & CDDR  & 100.0 & 2.66 & 0.350 \\
&        & CDDR (per-dim) & 100.0 & 1.77 & 0.324 \\
&        & Marginal CP & 100.0 & 2.33 & 0.350 \\
&        & Emp.-max~\cite{alanwar2023data} & 100.0 & 4.10 & 0.404 \\
\cmidrule{2-6}
& \multirow{4}{*}{$t_5$} & CDDR  & 100.0 & 28.5 & 0.608 \\
&        & CDDR (per-dim) & 100.0 & 10.1 & 0.467 \\
&        & Marginal CP & 100.0 & 20.2 & 0.569 \\
&        & Emp.-max~\cite{alanwar2023data} & 99.9 & 26.2 & 0.604 \\
\midrule
\multirow{3}{*}{\rotatebox{90}{Nonlin.}}
& \multirow{3}{*}{Gauss.} & CDDR & 100.0 & 10.9 & 0.110 \\
&        & Marginal CP & 100.0 & 10.2 & 0.103 \\
&        & Emp.-max~\cite{alanwar2023data} & 100.0 & 12.2 & 0.121 \\
\bottomrule
\end{tabular}%
}
\end{table}

Fig.~\ref{fig:lti-reach1} shows the projected reachable sets ($x_1$-$x_2$) for the LTI system under Gaussian noise with all methods overlaid. Fig.~\ref{fig:lti-reach-t} shows the corresponding sets under Student-$t$ noise, where heavier tails lead to visibly larger reachable sets. Fig.~\ref{fig:sqrt-reach} shows the reachable sets of the nonlinear system, confirming that CDDR remains applicable under non-Lipschitz dynamics.

\subsection{$\delta$-Level PAC Validation}

To empirically verify the PAC guarantee, we conduct a score-based coverage experiment. For each of $B = 1000$ independent random train/calibration splits (from a pool of $K = 1200$ Gaussian LTI trajectories, $K_{\mathrm{tr}}\!=\!200$, $n\!=\!1000$), we compute the per-step residual scores $s_{k} = \|r_{\mathrm{new},k}\|_\infty$ on $n_{\mathrm{test}} = 2000$ fixed test trajectories and check score containment $s_k \leq \hat{q}_k$ for all $k$, which is the exact condition underlying Lemma~\ref{lem:inclusion}. For Emp.-max, containment is checked per dimension: $|r_{\mathrm{new},k,d}| \leq \max_d$ for all $d$, matching the anisotropic $\Z_w = \mathrm{diag}(\max_d)$ construction. A split ``fails'' if trajectory-level coverage drops below $1-\alpha = 95\%$.

Table~\ref{tab:score-cov} reports the results. CDDR achieves $0\%$ failure rate, confirming the PAC bound. Marginal CP and Emp.-max both exhibit non-zero failure rates: Marginal CP due to its marginal-only coverage, and Emp.-max due to its reliance on the in-sample training maximum from only $K_{\mathrm{tr}} = 200$ trajectories.

\begin{table}[t]
\centering
\caption{Score-based trajectory coverage over $B\!=\!1000$ random calibration splits ($n\!=\!1000$, $n_{\mathrm{test}}\!=\!2000$). Fail\% is the fraction of splits with coverage $< 95\%$.}
\label{tab:score-cov}
\resizebox{\columnwidth}{!}{%
\begin{tabular}{lccccc}
\toprule
Method & Mean Cov. (\%) & Std & Min (\%) & Fail\% \\
\midrule
CDDR        & 98.6 & 0.25 & 97.8 & \textbf{0.0} \\
Marginal CP & 95.5 & 0.35 & 94.2 & 6.5 \\
Emp.-max~\cite{alanwar2023data} & 97.4 & 1.09 & 92.5 & 2.7 \\
\bottomrule
\end{tabular}%
}
\end{table}

\subsection{Score Function Design}

To evaluate the score variants from Remark~\ref{rem:normalized}, we test System~1 with anisotropic process noise $w(k) \sim \mathcal{N}(0, \Sigma_w)$, where $\Sigma_w = \mathrm{diag}(0.005^2, 0.005^2, 0.005^2, 0.005^2, 0.10^2)$. The fifth dimension has a $20\times$ larger noise standard deviation, causing the isotropic $\ell^\infty$ threshold to be dominated by this single dimension. We use $K = 3500$ ($K_{\mathrm{tr}} = 200$, $n = 3300$) to ensure $n \geq n_{\min} = 3105$ required by per-dimension calibration.

Table~\ref{tab:normalized} compares three score variants: isotropic ($s_{j,k} = \|r_{j,k}\|_\infty$), per-dimension, and normalized ($s_{j,k} = \|T_k^{-1} r_{j,k}\|_\infty$). Under anisotropic noise, the isotropic score is extremely conservative. Per-dimension calibration reduces the volume by four orders of magnitude but requires $n_{\min} = 3105$. The normalized score achieves a comparable volume reduction with $n_{\min} = 459$, the same as the isotropic baseline, by estimating the noise shape $T_k$ from training data at no statistical cost.

\begin{table}[t]
\centering
\caption{Effect of score function design on the 5D LTI system ($K\!=\!3500$, $n\!=\!3300$). $n_{\min}$: minimum calibration trajectories for PAC guarantee.}
\label{tab:normalized}
\resizebox{\columnwidth}{!}{%
\begin{tabular}{llccc}
\toprule
Noise & Score & Cov. (\%) & Vol($\times 10^{-3}$) & $n_{\min}$\\
\midrule
Gauss. ($\sigma I$) & Isotropic      & 100.0 & 27.6 & 459 \\
                    & Per-dim        & 100.0 & 16.9 & 3105 \\
                    & Normalized     & 100.0 & 27.6 & 459 \\
\midrule
Aniso. ($\Sigma_w$) & Isotropic      & 99.9  & 136066 & 459 \\
                    & Per-dim        & 99.9  & 26.2   & 3105 \\
                    & Normalized     & 100.0 & 41.0   & 459 \\
\bottomrule
\end{tabular}%
}
\end{table}

\section{Conclusion}

This paper proposed CDDR, a framework for data-driven reachability analysis with PAC coverage guarantees via LTT calibration. CDDR handles LTI systems (with or without measurement noise) and non-Lipschitz nonlinear systems, requiring only i.i.d.\ trajectories and no prior knowledge of noise bounds or Lipschitz constants. Experiments demonstrate that CDDR achieves zero failure rate across random calibration splits, unlike marginal CP and empirical-max baselines, while the normalized score substantially reduces volume under anisotropic noise.

Future work includes adaptive calibration for non-stationary environments, integration with control synthesis for safe planning, and the extension of the proposed framework to nonconvex reachable sets.

\bibliographystyle{IEEEtran}
\bibliography{references}

\end{document}